\documentclass{article}
\usepackage{parskip,fullpage}
\usepackage{authblk}
\usepackage{amsmath,amsthm,amssymb,amsfonts}
\usepackage{physics}
\usepackage[colorlinks,citecolor=blue]{hyperref}
\usepackage{cleveref}

\usepackage[
	lambda,
	operators,
	advantage,
	sets,
	adversary,
	landau,
	probability,
	notions,	
	logic,
	ff,
	mm,
	primitives,
	events,
	complexity,
	asymptotics,
	keys]{cryptocode}

\renewcommand{\epsilon}{\varepsilon}
\newcommand{\F}{\mathbb{F}}

\newcommand{\from}{\gets}

\DeclareMathOperator{\calG}{\mathcal{G}}

\DeclareMathOperator{\calR}{\mathcal{R}}

\DeclareMathOperator{\calX}{\mathcal{X}}
\DeclareMathOperator{\calY}{\mathcal{Y}}

\DeclareMathOperator{\regQ}{\mathsf{Q}}
\DeclareMathOperator{\regR}{\mathsf{R}}

\newcommand{\Gbbotp}{\calG_{\mathrm{BB}\text{-}\mathrm{OTP}}}
\newcommand{\Gcollapsing}{\calG_{\mathrm{Collapsing}}}
\newcommand{\Gauth}{\calG_{\mathrm{Auth}}}

\newcommand{\Obf}{\mathsf{Obf}}

\newcommand{\KeyGen}{\mathsf{KeyGen}}
\newcommand{\TokenGen}{\mathsf{TokenGen}}
\newcommand{\TokenEval}{\mathsf{TokenEval}}
\newcommand{\AuthKeyGen}{\mathsf{AuthKeyGen}}
\newcommand{\AuthTokenGen}{\mathsf{AuthTokenGen}}
\newcommand{\Sign}{\mathsf{Sign}}
\newcommand{\Verify}{\mathsf{Verify}}

\newcommand{\tk}{\mathsf{tk}}
\newcommand{\Increase}{\mathsf{Increase}}
\newcommand{\StdDecomp}{\mathsf{StdDecomp}}
\newcommand{\FindInput}{\mathsf{FindInput}}
\newcommand{\Invert}{\mathsf{Invert}}
\newcommand{\CPhsO}{\mathsf{CPhsO}}

\newtheorem{theorem}{Theorem}
\newtheorem{proposition}{Proposition}

\newtheorem{lemma}{Lemma}

\newtheorem{claim}{Claim}

\newtheorem*{remark*}{Remark}

\makeatletter
\newtheorem*{rep@lemma}{\rep@title}
\newcommand{\newreptheorem}[2]{%
\newenvironment{rep#1}[1]{%
 \def\rep@title{#2 \ref{##1}}%
 \begin{rep@lemma}}%
 {\end{rep@lemma}}}
\makeatother

\newreptheorem{lemma}{Lemma}

\theoremstyle{definition}
\newtheorem{definition}{Definition}
\newtheorem{construction}{Construction}

\Crefname{fact}{Fact}{Facts}

\title{Quantum One-Time Protection of any Randomized Algorithm}

\author{Sam Gunn and Ramis Movassagh}
\affil{Google Quantum AI, Venice CA, 90291}

\date{\today}

\begin{document}

\maketitle

\begin{abstract}
    The meteoric rise in power and popularity of machine learning models dependent on valuable training data has reignited a basic tension between the power of running a program locally and the risk of exposing details of that program to the user.  
    At the same time, fundamental properties of quantum states offer new solutions to data and program security that can require strikingly few quantum resources to exploit, and offer advantages outside of mere computational run time.
    In this work, we demonstrate such a solution with \emph{quantum one-time tokens}.
    
    A quantum one-time token is a quantum state that permits a certain program to be evaluated \emph{exactly once}.
    One-time security guarantees, roughly, that the token cannot be used to evaluate the program more than once.
    We propose a scheme for building quantum one-time tokens for any randomized classical program, which include generative AI models.
    We prove that the scheme satisfies an interesting definition of one-time security \emph{as long as outputs of the classical algorithm have high enough min-entropy}, in a black box model.
    
    Importantly, the classical program being protected does not need to be implemented coherently on a quantum computer. 
    In fact, the size and complexity of the quantum one-time token is \emph{independent} of the program being protected, and additional quantum resources serve only to increase the security of the protocol.  
    Due to this flexibility in adjusting the security, we believe that our proposal is parsimonious enough to serve as a promising candidate for a near-term useful demonstration of quantum computing in either the NISQ or early fault tolerant regime.
\end{abstract}

\section{Introduction}
Commercializing software presents a central dilemma: How does a proprietor distribute software, without forfeiting ownership?
On the one hand, software must be made available to users for them to use it; on the other, once the software is distributed, an unlimited number of unauthorized copies can be made.
This problem is more acute in the age of generative AI, where the software can be extremely valuable and potentially reveal private information.

There are two widely-adopted solutions to this problem:
\begin{enumerate}
    \item The proprietor can distribute an obfuscated version of the software. However, this opens up the possibility of piracy, and the user can always run the software an unlimited number of times.
    \item The proprietor can allow queries to the software, instead of distributing the software itself. But this requires communication between the user and the proprietor every time the software is used.
\end{enumerate}
Sometimes a combination of these solutions is employed --- for instance, an internet browser might be obfuscated and made public while the search engine itself is kept on servers.

There is an apparent trade-off between usability and exclusivity in these solutions.
Distributing obfuscated software makes it highly usable, but it is impossible to impose any limits on the number of times it can be used.
Keeping the software on servers and responding to user queries has high exclusivity, but lower usability because the user needs to communicate with the server to perform computations.
Furthermore, this solution is only available to proprietors with enough resources to host a reliable server.

This trade-off between usability and exclusivity is inherent in the classical setting, because classical information can always be copied.
As soon as software is distributed, the user can query or copy it as many times as they want.
Therefore it is natural to look to quantum mechanics for improved solutions. In contrast to classical information, quantum information cannot be cloned.
Indeed, unclonable cryptography is able to make use of this principle to improve both of the above solutions.\footnote{We note that unclonable cryptography has led to many additional interesting and varied protocols, including quantum money \cite{Wie83}, quantum key distribution \cite{Ben14}, and certified deletion \cite{BI20}.}

\paragraph{Improving Solution (1) with copy protection.}
In copy protection, introduced by \cite{Aar09}, a quantum state is created that allows a program to be run an unlimited number of times, but not copied.
In a classical oracle model, quantum copy protection is possible for any unlearnable program \cite{ALL+21}.
In the standard model, there exist unlearnable programs that cannot be copy protected \cite{ALP21}; nonetheless, it is known how to copy protect a small number of particular functionalities in the standard model \cite{CLLZ21,LLQZ22,CG24,AB24}.

Quantum copy protection addresses the piracy concern in Solution (1), but it does not apply in the setting where the proprietor wishes to distribute \emph{individual queries} to the software.
This is particularly relevant in the current age of generative AI, where the pay-per-query model is prevalent.

\paragraph{Improving Solution (2) with one-time programs.}
For the pay-per-query model, one-time programs are a more suitable solution.
A one-time program token is a quantum state that enables a program to be evaluated \emph{once}.
One-time programs therefore remove the need for online communication in Solution (2): The proprietor distributes tokens, and the users consume the tokens offline at their leisure.

\paragraph{This work.}
In this work, we propose the first general-purpose tokenized program scheme using quantum information.
Prior work was restricted to protecting specific cryptographic functionalities, and in particular did not apply to non-cryptographic programs like generative AI models.

Our method uses a one-time signature scheme and a program obfuscator to compile any randomized algorithm into a one-time token that allows the algorithm to be executed on any input of the user's choice.
We note that the quantum capabilities required for our scheme are \emph{completely independent of the program being protected}.
Therefore, our scheme could plausibly be used to protect a large program using only a small, noisy quantum device --- even one that is incapable of demonstrating quantum computational advantage!

\paragraph{Prior and concurrent work on one-time programs.}
One-time programs were initially studied in \cite{GKR08}, but without quantum computation.
The idea of quantum one-time programs was originally suggested in \cite{BGS13}, where it was shown that it is impossible to build quantum one-time tokens for deterministic programs without specialized hardware assumptions.
This is because a simple rewinding attack would allow a user to make arbitrarily many queries to the algorithm, given any state that allows it to be run once.
Later, \cite{BS23} presented a scheme to one-time protect signature functions in the standard model.

In concurrent and independent work, \cite{GLR+24} also present a scheme for the one-time protection of arbitrary randomized algorithms.
Their scheme is essentially identical to ours, except that they instantiate the quantum one-time authentication scheme with the particular construction of \cite{CLLZ21}.
With respect to security definitions and proofs, their results are generally much more extensive, including stronger (albeit more complex) definitions of one-time security and new impossibility results.
However, they take a very different and complementary approach to proving security.
It would be interesting to know whether our results (and in particular \Cref{lemma:collapsing}, which is simple and self-contained) say anything about security in their setting, or whether their results imply our \Cref{lemma:collapsing}.

\paragraph{Removing quantum communication.}
The protocol described below requires quantum communication between the proprietor and the user, because the proprietor needs to send the user the quantum one-time program token.
However, if the one-time signatures are instantiated with those from \cite{CLLZ21}, then the results of \cite{Shm22,CHV23} allow us to replace this quantum communication with a remote state preparation protocol that uses only classical communication.
That is, there exists a polynomial-time interactive protocol between a quantum user and a \emph{classical} proprietor that results in the user holding a quantum one-time token which they can only use once.
Unfortunately, this protocol is highly complex and not likely to be feasible on a near-term quantum device.\footnote{It also requires the assumptions that LWE is sub-exponentially hard for quantum computers and that indistinguishability obfuscation for classical circuits exists with sub-exponential security against quantum polynomial-time adversaries.}


\section{The construction}
In this section, we present our scheme for one-time protecting an arbitrary randomized algorithm.
Any randomized algorithm can be described by a function $f$, which takes the input together with a random string, and outputs the result of the computation.

Our scheme relies on three cryptographic primitives:
\begin{itemize}
    \item $(\AuthKeyGen, \AuthTokenGen, \Sign, \Verify)$, a quantum one-time authentication scheme. For instance, any quantum one-time signature scheme will suffice --- although it is not required to be publicly verifiable. We define a quantum one-time authentication scheme in \Cref{definition:one-time-auth}.
    \item $\Obf$, a classical program obfuscator. This can be heuristically instantiated with any classical obfuscation algorithm (for instance, a candidate indistinguishability obfuscator), but we model it as black-box obfuscation for our security proof.
    \item $H$, a hash function modeled as a random oracle. This can be replaced with a pseudorandom function without affecting the security proof, because we model $\Obf$ as black-box obfuscation.
\end{itemize}

The software proprietor will generate a program token which can be used to evaluate $f$ one time. A program token consists of two parts:
\begin{itemize}
    \item $\ket{\psi}$, an unclonable quantum state that does not depend on $f$.
    \item $\Obf(P)$, an obfuscated classical circuit $P$ that depends on $f$. This part is completely classical.
\end{itemize}

We emphasize that $\ket{\psi}$ \emph{does not depend on the program being obfuscated}.
In particular, this means that the quantum capabilities required of the user and the software proprietor do not depend on the complexity of the computation being one-time protected.
Highly-complex computations like evaluation of a generative AI model can be protected, even if the quantum capabilities of both the user and software proprietor are limited.
Furthermore, if we use the one-time signature scheme from \cite{CLLZ21} for one-time authentication, then the only quantum capabilities required of the user are to store a constant-sized quantum state and to measure it in the standard and Hadamard bases.

At a high level, our scheme works by requiring the user to one-time authenticate any input they wish to evaluate $f$ on.
The program $P$ will not evaluate $f$ unless a valid input-authentication pair is provided.
Crucially, $P$ will determine the randomness to use for $f$ by computing a hash of the input-authentication pair.

\subsection{One-time authentication with strong unforgeability}
Before we state our one-time program construction in \Cref{construction:one-time-programs}, let us define quantum one-time authentication and strong unforgeability.
This is essentially a secret-key version of the definition of a quantum one-time signature scheme from \cite{BS23}, except that we insist on it being infeasible to generate two distinct signatures \emph{even for the same message}.
This property is referred to as \emph{strong} unforgeability in \cite{CHV23,BKNY23}.

\begin{remark*}
    There was an error in a previous version of this paper, due to our defining only the standard, weak version of unforgeability. We thank an anonymous reviewer for QIP 2025 for pointing this out.
\end{remark*}

In the following definition, the notation $\adv^{\Verify(\sk, \cdot)}$ means that $\adv$ is given quantum query access to the function $(x,z) \mapsto \Verify(\sk, (x,z))$.

\begin{definition}[Quantum one-time authentication with strong unforgeability] \label{definition:one-time-auth}
    We say that a tuple of four polynomial-time algorithms $(\AuthKeyGen, \AuthTokenGen, \Sign, \Verify)$ is a \emph{quantum one-time authentication scheme with strong unforgeability} if the following hold:\footnote{The notation $\secparam$ just means $\secpar$ repeated 1's; we sometimes use $\secparam$ instead of $\secpar$ as the input to an algorithm because we want the algorithm to run in time that is polynomial in the size of the input.}
    \begin{itemize}
        \item (Correctness) $\Verify(\sk, (x,z)) = 1$ if $z \from \Sign(x, \ket{\tk})$, $\ket{\tk} \from \TokenGen(\sk)$, and $\sk \from \AuthKeyGen(\secparam)$.
        \item (One-time unforgeability) For any polynomial-time adversary $\adv$,
        \[
            \Pr_{\substack{(x_1, z_1, x_2, z_2) \from \adv^{\Verify(\sk, \cdot)}(\ket{\tk}) \\ \ket{\tk} \from \AuthTokenGen(\sk) \\ \sk \from \AuthKeyGen(\secparam)}}[\Verify(\sk, (x_1,z_1)) = \Verify(\sk, (x_2,z_2)) = 1 \text{ and } (x_1,z_1) \ne (x_2,z_2)] = \negl.
        \]
    \end{itemize}
\end{definition}

Since (strongly unforgeable) one-time authentication schemes can be easily built from any (strongly unforgeable) one-time signature scheme by simply considering the public key to be part of the secret key, the results of \cite{BS23,CLLZ21,CHV23,BKNY23} imply their existence.
In fact, these results show the existence of \emph{information-theoretically secure} quantum one-time authentication schemes --- that is, these constructions are secure against computationally unbounded adversaries, as long as they are only allowed to make $\poly$ queries to $\Verify$.

We describe one such scheme for single-bit messages now.
The case for multi-bit messages works by simply using several single-bit schemes in parallel.

\begin{construction}[Single-bit quantum one-time authentication from hidden subspace states, \cite{BS23}] \label{construction:one-time-auth}
    The algorithms are defined as follows.
    
    \paragraph{$\AuthKeyGen(\secparam)$}
    \begin{enumerate}
        \item Sample a uniformly random subspace $A \subseteq \F_2^\secpar$ of dimension $\secpar/2$.
        \item Output $A$.
    \end{enumerate}

    \paragraph{$\AuthTokenGen(A)$}
    \begin{enumerate}
        \item Output $\ket{A} = 2^{-\secpar/4} \sum_{a \in A} \ket{a}$.
    \end{enumerate}

    \paragraph{$\Sign(x, \ket{A})$}
    \begin{enumerate}
        \item If $x = 0$, measure $A$ in the standard basis and output the result.
        \item If $x = 1$, measure $A$ in the Hadamard basis and output the result.
    \end{enumerate}

    \paragraph{$\Verify(A, (x,z))$}
    \begin{enumerate}
        \item If $z = 0^\secpar$, reject (output $\bot$).
        \item If $x = 0$ and $z \in A$, accept (output 1).
        \item If $x = 1$ and $z \in A^\perp$, accept (output 1).
        \item Otherwise, reject (output $\bot$).
    \end{enumerate}
\end{construction}

Note that \Cref{construction:one-time-auth} is correct because
\[
    H^\secpar \ket{A} = 2^{-\secpar/4} \sum_{\substack{b \in \F_2^n \\ b \cdot a = 0 \ \forall a \in A}} \ket{b} = \ket*{A^\perp}.
\]

Security was proven in \cite{BS23}.

\begin{theorem}[Adapted from Theorem 16 of \cite{BS23}]
    \Cref{construction:one-time-auth} is an information-theoretically secure quantum one-time authentication scheme.
\end{theorem}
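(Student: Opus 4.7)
The plan is to prove correctness and strong one-time unforgeability separately. Correctness follows immediately from the Hadamard identity $H^{\otimes \secpar} \ket{A} = \ket*{A^\perp}$ already recorded in the text, so the real work is in strong one-time unforgeability. Given an adversary $\adv$ that, after $\poly$ quantum queries to $\Verify(A, \cdot)$, produces $(x_1, z_1) \ne (x_2, z_2)$ both accepted by the verifier, I would case on the message bits.

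\textbf{Case A (cross-basis, $x_1 \ne x_2$).} Here $\adv$ has output one nonzero element of $A$ and one nonzero element of $A^\perp$. This is exactly the subspace direct-product / monogamy game for $\ket{A}$ analyzed in \cite{BS23}, building on Aaronson–Christiano and Ben-David–Sattath, and is known to be information-theoretically hard given a single copy of $\ket{A}$ and $\poly$ queries to membership in $A$ and $A^\perp$. I would invoke this as a black box; strong unforgeability imposes no extra burden in this case beyond standard one-time unforgeability.

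\textbf{Case B (same-basis, $x_1 = x_2$).} By the Hadamard symmetry that sends $\ket{A} \leftrightarrow \ket*{A^\perp}$ and swaps the $x=0$ and $x=1$ branches of $\Verify$, it suffices to handle $x_1 = x_2 = 0$, where $\adv$ must produce two distinct nonzero elements $z_1, z_2 \in A$. The intuition is that $\ket{A}$ is worth ``one sample'': having produced $z_1 \in A$, the residual task is to find a second element of $A$ from membership oracle access alone, which is search in a set of density $2^{-\secpar/2}$ inside $\F_2^\secpar$ and requires $\Omega(2^{\secpar/4})$ quantum queries by BBBV. Combined with a union bound over Case A, Case B with $x=0$, and Case B with $x=1$, this gives $\negl$ overall.

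The main obstacle is formalizing Case B, specifically the step that replaces the state $\ket{A}$ with a single classical sample after the first forgery: taken literally this is false, since $\ket{A}$ can be measured in arbitrary bases and is strictly more informative than one standard-basis outcome. The cleanest route I would pursue is a parallel version of the Ben-David–Sattath direct-product argument, reformulating ``two distinct elements of $A$'' as a subspace-identification game and bounding the winning probability via a compressed-oracle or one-way-to-hiding analysis in which the adversary's signed output is revealed before a simulated second round. Structurally this mirrors the cross-basis analysis already in \cite{BS23}, and I expect the technical work to be adapting the bookkeeping rather than introducing new techniques.
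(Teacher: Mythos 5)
The paper does not actually prove this theorem: it states it, attributes it to Theorem 16 of \cite{BS23}, and writes ``Security was proven in \cite{BS23},'' so there is no in-paper argument to compare yours against step by step. Your case split is nevertheless the right skeleton for what such a proof must contain, and it correctly isolates the part that \cite{BS23} genuinely covers (your Case A, the cross-basis direct-product game: producing a nonzero element of $A$ together with a nonzero element of $A^\perp$ given one copy of $\ket{A}$ and membership oracles) from the part it does not (your Case B, two distinct accepted signatures of the \emph{same} message). That second part is exactly the strong-unforgeability strengthening that the paper's own remark flags as having been missing from an earlier version, and which the surrounding text attributes to \cite{CHV23,BKNY23} rather than to \cite{BS23} alone.

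The genuine gap is Case B, and you have diagnosed it yourself but not closed it. The reduction ``measure the first forgery, then the residual task is search for a second element of a density-$2^{-\secpar/2}$ set, hence BBBV gives $\Omega(2^{\secpar/4})$ queries'' is not valid: after producing $z_1 \in A$ the adversary still holds whatever post-measurement state remains of $\ket{A}$, and query lower bounds of BBBV type apply to oracle-only algorithms, not to algorithms holding auxiliary quantum advice correlated with $A$. The proposed repair --- a ``parallel'' direct-product argument via compressed oracles or one-way-to-hiding in which the first signed output is revealed before a simulated second round --- is a plausible research direction, but it is a direction rather than an argument; in particular, conditioning on the first accepted signature disturbs the adversary's state in exactly the way that makes the inner-product/adversary bookkeeping of the Ben-David--Sattath proof delicate to adapt, and nothing in the proposal shows how to control that disturbance. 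This same-message case is precisely the content of the strong-unforgeability results in \cite{CHV23,BKNY23}; the honest conclusion is that your write-up, like the paper's, must either cite those results for Case B or reproduce their proof. As it stands, the proposal establishes correctness and the weak (cross-basis) half of unforgeability, and leaves the half that distinguishes strong from weak unforgeability at the level of intuition.
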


\subsection{Our one-time program construction}

Let $f$ be the program we wish to one-time protect.
Our scheme works by publishing a program $\hat{P}$ which contains within it the description of $f$.
Crucially, this program will have the following properties:
\begin{itemize}
    \item $\hat{P}$ uses obfuscation so as not to reveal the code of $f$;
    \item $\hat{P}$ will compute $f$ on any input, but only if the input is authenticated with a one-time authentication scheme; and
    \item $\hat{P}$ suitably ``mixes'' the input, the authentication tag, and the output together in such a way that, if the output is measured, the input and authentication tag effectively collapse.
\end{itemize}

While the first property follows from our use of an oracle model, and the second property is by construction, the third will prove to be quite difficult to establish.
The statement of this third property is formalized in \Cref{lemma:collapsing}, which allows us to prove the one-time security of our scheme.

\begin{construction}[General-purpose one-time programs] \label{construction:one-time-programs}
    Let $f : \calX \times \calR \to \calY$ be the function to be one-time protected, described as a bit-string representing a classical circuit that computes $f$.
    Let $\Obf$ be an obfuscation algorithm for classical circuits, and let $(\AuthKeyGen, \AuthTokenGen, \Sign, \Verify)$ be a quantum one-time authentication scheme as defined in \Cref{definition:one-time-auth}.
    Suppose that the one-time authentication scheme produces authentications of length $m$ for all messages $x \in \calX$, and let $H : \calX \times \{0,1\}^m \to \calR$ be a random oracle.
        
    Our one-time protection scheme is specified by the algorithms $\KeyGen$, $\TokenGen$, and $\TokenEval$, defined as follows.

    \paragraph{$\KeyGen(\secparam, f)$}
    \begin{enumerate}
        \item Sample $\sk \from \AuthKeyGen(\secparam)$.
        \item Let $P : \calX \times \{0,1\}^m \to \calY \cup \{\bot\}$ be a classical circuit that does the following on input $(x, z)$:
        \begin{enumerate}
            \item Compute $\Verify(\sk, (x, z))$. If it rejects, output the special ``reject'' symbol $\bot$.
            \item Otherwise, output $f(x; H(x, z))$.
        \end{enumerate}
        \item Compute the obfuscation $\hat{P} = \Obf(P)$ of $P$.
        \item Output $(\sk, \hat{P})$.
    \end{enumerate}
    
    \paragraph{$\TokenGen(\sk)$}
    \begin{enumerate}
        \item Compute the one-time authentication token $\ket{\tk} \from \AuthTokenGen(\sk)$.
        \item Output $\ket{\tk}$ as the one-time program token.
    \end{enumerate}

    \paragraph{$\TokenEval(x, \ket*{\tk}, \hat{P})$}
    \begin{enumerate}
        \item Compute $z \from \Sign(x, \ket*{\tk})$.
        \item Compute $\hat{P}(x, z)$ and output the result.
    \end{enumerate}
\end{construction}

After $\KeyGen$ is run, the obfuscated program $\hat{P} = \Obf(P)$ is published as a public key.
Given a one-time program token $\ket{\tk}$, together with this public key, a user can evaluate $f$ on any input $x$ of their choice.

\section{Security} \label{section:security}
We formalize one-time security of our one-time program scheme with the following \emph{black-box one-time program} game.
Essentially, security says that once an adversary produces a measured output of $\hat{P}$, the adversary cannot find new accepting inputs to $\hat{P}$.

\begin{enumerate}
    \item[] $\Gbbotp(\ket{\psi}, \hat{P})$:
    \item The adversary is given $\ket{\psi}$. The adversary is also given (quantum) oracle access to $\hat{P}$. The adversary is allowed to access this oracle throughout the game.
    \item The adversary submits a quantum query to the challenger on register $\regQ$. The challenger does the following:
    \begin{enumerate}
        \item Compute $\hat{P}$ on $\regQ$, placing the result of the computation onto a register $\regR$.
        \item Measure $\regR$, obtaining outcome $y$.
        \item If $y = \bot$, the game is aborted and the adversary loses.
        \item Otherwise, return $\regQ$ to the adversary.
    \end{enumerate}
    \item The adversary submits a quantum query to the challenger on register $\regQ$. The challenger does the following:
    \begin{enumerate}
        \item Compute $\hat{P}$ on $\regQ$, placing the result of the computation onto a register $\regR$.
        \item Measure $\regR$, obtaining outcome $y'$.
    \end{enumerate}
    \item The adversary wins if $y' \not\in \{y, \bot\}$.
\end{enumerate}

We say that a scheme is black-box one-time secure if no polynomial-time adversary can win $\Gbbotp$ with inverse polynomial probability.

\begin{definition}[One-time program] \label{definition:one-time-security}
    A one-time program scheme consists of three polynomial-time algorithms $\KeyGen$, $\TokenGen$, and $\TokenEval$.
    We say that it is \emph{black-box one-time secure} for a function $f$ if, for all polynomial-time adversaries $\adv$ making at most $\poly$ quantum oracle queries,
    \[
        \Pr_{\substack{(\sk, \hat{P}) \from \KeyGen(\secparam, f) \\ \ket{\tk} \from \TokenGen(\sk)}}[\adv^{\hat{P}} \text{ wins } \Gbbotp(\ket{\tk}, \hat{P})] = \negl.
    \]
\end{definition}

We do not believe that this is the final say in definitions of one-time security, but we believe it is a useful starting point.
The rest of this paper is devoted to proving the following theorem.

\begin{theorem}[\Cref{construction:one-time-programs} is one-time secure] \label{theorem:one-time-security}
    Suppose that, for every $x \in \calX$, the min-entropy of $f(x; r)$ for random $r \from \calR$ is at least $\poly$.
    Suppose further that the quantum one-time authentication scheme in \Cref{construction:one-time-programs} is secure.
    Then \Cref{construction:one-time-programs} is black-box one-time secure for $f$.
\end{theorem}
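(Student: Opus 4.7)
The plan is to combine \Cref{lemma:collapsing} with the strong unforgeability guarantee of the one-time authentication scheme. Since $\Obf$ is modeled as a black-box obfuscator, we may replace $\hat{P}$ in the game $\Gbbotp$ with direct quantum oracle access to the circuit $P$, which internally invokes $\Verify(\sk,\cdot)$ and the random oracle $H$. The proof then proceeds in two stages: first I would show that the challenger's output-measurement step effectively measures the adversary's input pair $(x,z)$, and then I would reduce any remaining adversarial winning strategy to an attack on strong one-time unforgeability.

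For the first stage, I would invoke \Cref{lemma:collapsing} to argue that, from the adversary's viewpoint, the step in which the challenger computes $P$ on $\regQ$ into $\regR$ and measures $\regR$ to obtain $y$ is statistically indistinguishable from first measuring $\regQ$ to obtain a classical pair $(x,z)$ and then returning $y = P(x,z)$. The underlying intuition is that because $H$ is a random oracle, $H(x,z)$ is an independent uniform string for each distinct $(x,z)$, and because $f(x;r)$ has super-logarithmic min-entropy in $r$, the outputs $f(x;H(x,z))$ are effectively distinct across accepting pairs. Conditioned on $y \ne \bot$, the adversary's post-query state is then trace-close to one in which they hold a single classical accepting pair $(x,z)$ with $\Verify(\sk,(x,z))=1$.

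For the second stage, I would apply the same collapsing argument to the adversary's second query: to win, they must produce $y' \notin \{y,\bot\}$, which is statistically equivalent to preparing a classical accepting pair $(x',z')$ with $f(x';H(x',z')) \ne y$. Since $P$ is deterministic once $H$ is fixed, querying the same input twice returns the same output, so necessarily $(x',z') \ne (x,z)$. We therefore obtain a polynomial-time algorithm which, given only $\ket{\tk}$ and quantum oracle access to $\Verify(\sk,\cdot)$ (simulating $P$ internally by lazily sampling $H$ and running the publicly known $f$), outputs two distinct accepting pairs. This contradicts strong one-time unforgeability as in \Cref{definition:one-time-auth}, completing the reduction.

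The hard part will be the collapsing step itself. Classically it is immediate that high-min-entropy outputs commit the input, but in the quantum setting one must argue not only that the outcome distribution of the measurement matches the collapsed experiment, but also that the adversary's \emph{residual quantum state} on $\regQ$ is trace-close to what it would be after a full measurement of $(x,z)$ had been performed. This requires a careful quantum collapsing-style argument in the spirit of Unruh's collapsing hash functions, exploiting both the independence of $H$ across distinct inputs and the min-entropy hypothesis on $f$ to handle quantum superpositions over super-polynomially many inputs; I expect this to be precisely the content of \Cref{lemma:collapsing}, after which the two-stage argument above goes through routinely.
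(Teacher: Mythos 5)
Your proposal is correct and follows essentially the same route as the paper: invoke \Cref{lemma:collapsing} to argue that the challenger's measurement of the output register is indistinguishable (to the adversary) from a full measurement of the input register $\regQ$, then turn the two measured accepting pairs $(x_1,z_1) \ne (x_2,z_2)$ into a reduction that breaks strong one-time unforgeability, simulating $P$ via oracle access to $\Verify(\sk,\cdot)$ and a lazily-sampled $H$. You also correctly locate all of the real difficulty in the collapsing lemma itself, which is exactly where the paper places it.
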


Our proof of \Cref{theorem:one-time-security} will center on the notion of a \emph{collapsing hash function}.
The definition of a collapsing hash function is due to \cite{Unr16}; informally, we say that $g$ is collapsing if measuring the output of $g$ collapses the input register from the perspective of any polynomial-time adversary.

Formally, let $g^H$ be a function that is computed by making calls to a random oracle $H$.
In this case we define collapsing with the following game.

\begin{enumerate}
    \item[] $\Gcollapsing(b, g^H)$:
    \item The adversary is given a full description of $g$ and oracle access to $H$.
    \item The adversary sends the challenger a query on register $\regQ$.
    \item The challenger computes $g^H$ on register $\regQ$ and measures the outcome. If $b = 1$, the challenger measures $\regQ$ as well.
    \item The challenger returns $\regQ$ to the adversary.
    \item The adversary outputs a bit $b'$.
\end{enumerate}

\begin{definition}[Collapsing hash function \cite{Unr16}]
    Let $g^H$ be a function that is computed by making calls to a random oracle $H$.
    We say that $g^H$ is \emph{collapsing} if, for all adversaries $\adv$ making at most $\poly$ quantum oracle queries to $H$,
    \[
        \Pr_{\substack{H \\ b \from \{0,1\}}}[\adv^H \text{ outputs } b' = b \text{ in } \Gcollapsing(b,g^H)] \le \frac{1}{2} + \negl.
    \]
\end{definition}

The key idea in our proof of \Cref{theorem:one-time-security} is to show that the function $g : x \mapsto f(x; H(x))$ is collapsing if $H$ is a random function and $f$ has significant min-entropy for every $x$.
This is stated as \Cref{lemma:collapsing}.

\begin{lemma} \label{lemma:collapsing}
    Let $f : \{0,1\}^m \times \{0,1\}^n \to \calY$. Suppose that for all $x \in \{0,1\}^m$, the min-entropy of $f(x; r)$ for random $r \from \{0,1\}^n$ is at least $\tau$. Then if $H : \{0,1\}^m \to \{0,1\}^n$ is a random oracle, the function $g^H : x \mapsto f(x; H(x))$ is collapsing with advantage $O(q^3 \cdot 2^{-\tau})$.
\end{lemma}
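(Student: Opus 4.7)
The plan is to analyze the collapsing game using Zhandry's compressed-oracle simulation of $H$, which is known to perfectly reproduce the adversary's view of a random oracle while maintaining an explicit (superposed) database $D$ of input/output pairs that has size at most $q$ after $q$ queries. In this picture, whenever the challenger applies $g^H$ to the query register $\regQ$, it effectively queries $D$ on the input $x$ held in $\regQ$, then writes $f(x; D(x))$ onto $\regR$. After measuring $\regR$ and obtaining $y$, the joint state on $(\regQ, D, \text{adversary workspace})$ is supported on configurations in which $D$ has an entry for $x$ satisfying $f(x; D(x)) = y$.

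Next I would introduce a projector $\Pi_{\mathrm{good}}$ onto the subspace where $D$ contains at most one entry $(x^*, r^*)$ with $f(x^*; r^*) = y$, and bound the collapsing advantage by the weight the state places outside this subspace. The key structural observation is that on $\Pi_{\mathrm{good}}$ the value of $\regQ$ is a deterministic function of the pair $(D, y)$, so the joint state factors as $\sum_D \beta_D \ket{x^*(D)}_{\regQ} \otimes \ket{D} \otimes \ket{\varphi_D}$ with the $\ket{x^*(D)}_\regQ$ entry perfectly correlated with the rest. For such a state the measurement of $\regQ$ has no effect on the reduced density matrix seen by the adversary; hence the entire advantage between $b=0$ and $b=1$ is controlled by the weight of the state on the orthogonal complement, which is exactly the subspace in which $D$ encodes a genuine collision of the map $x \mapsto f(x; D(x))$.

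The remaining step is to bound that bad weight. I would use the min-entropy hypothesis: for any two distinct inputs $x_1, x_2$, the values $D(x_1)$ and $D(x_2)$ behave as (essentially) independent uniform $r_1, r_2 \in \{0,1\}^n$ inside the compressed oracle, and because each $f(x_i; r)$ has min-entropy at least $\tau$, we get $\Pr[f(x_1; r_1) = f(x_2; r_2)] \le 2^{-\tau}$ by a standard collision bound. A union bound over the at most $\binom{q}{2}$ pairs of database entries, combined with a progress-measure argument that tracks how each of the $q$ queries can increase the amplitude of "collision-containing" databases by at most $O(q \cdot 2^{-\tau})$, yields the claimed $O(q^3 \cdot 2^{-\tau})$ upper bound on the distinguishing advantage.

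The main obstacle is this last step: lifting the clean classical pair-wise collision bound to a rigorous bound on the quantum amplitude living on "bad" databases. This has to be done inside the compressed-oracle formalism, carefully accounting for how a quantum query can re-weight or create new collisions against pre-existing database entries, and ensuring that the accumulated amplitude over all $q$ queries remains $O(q^3 \cdot 2^{-\tau})$ rather than something worse; I expect the bulk of the technical work to go into stating and iterating the correct one-query progress lemma, analogous to Zhandry's lemmas used for lower-bounding quantum collision-finding.
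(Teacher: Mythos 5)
Your high-level plan matches the paper's proof: simulate $H$ with Zhandry's compressed oracle, observe that on databases $D$ with no $f$-collision the challenge input is a deterministic function of $(D,y)$, and bound the weight on collision-containing databases by a per-query progress argument (this last part is the paper's \Cref{claim:collision-finding}, and your sketch of it, including the $2^{n-\tau}$ preimage count from the min-entropy hypothesis, is the right shape). However, there is a genuine gap in the middle step. You assert that on $\Pi_{\mathrm{good}}$ ``the measurement of $\regQ$ has no effect on the reduced density matrix seen by the adversary.'' That statement is only about the adversary's registers at the instant of the challenge, and it does not bound the distinguishing advantage: after $\regQ$ is returned, the adversary makes up to $q_1$ further oracle queries, and those queries act on the database register $D$ with which $\regQ$ is correlated. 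Measuring $\regQ$ destroys coherence between database branches $D,D'$ with $x^*(D)\ne x^*(D')$, and that loss of coherence is in principle detectable through subsequent queries to $H$. Equivalently, the measurement of $\regQ$ can be purified as a unitary $\Invert_f$ that computes $x^*(D,y)$ onto a challenger-held ancilla; to conclude that the adversary's view is unaffected one must show that $\Invert_f$ can be deferred past the post-challenge queries, i.e.\ that $\Invert_f$ approximately commutes with $\CPhsO$.

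This deferral is exactly the paper's \Cref{claim:almost-commuting} (a generalization of Zhandry's Lemma 7): $\Invert_f$ and $\CPhsO$ are $O(2^{-\tau/2})$-almost commuting, contributing an additional $O(q\cdot 2^{-\tau/2})$ term on top of the $O(\sqrt{q^3\cdot 2^{-\tau}})$ collision weight. Your proposal accounts only for the collision weight and omits this second source of error entirely; without it the argument does not close. The proof of the commuting claim is not automatic either --- it reduces to showing that $\StdDecomp$ at a point $x$ changes the output of $\Invert_f$ with probability at most $2^{-\tau}$, again using the min-entropy of $f(x;\cdot)$. The remainder of your plan is what the paper carries out.
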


We will use the compressed oracles technique of \cite{Zha19} to prove \Cref{lemma:collapsing} in \Cref{section:proof-of-lemma-1}.
This proof is somewhat involved, so before we present it we will show in \Cref{section:warm-up} that random oracles are collapsing as a warm-up.
The proof of \Cref{lemma:collapsing} is essentially identical, except that the components need to be updated.

Once we have \Cref{lemma:collapsing}, \Cref{theorem:one-time-security} will follow immediately, as we will now see.

\begin{proof}[Proof of \Cref{theorem:one-time-security}]
    Suppose that an adversary $\adv$ making $\poly$ many oracle queries satisfies
    \[
        \Pr_{\substack{(\sk, \hat{P}) \from \KeyGen(\secparam, f) \\ \ket{\tk} \from \TokenGen(\sk)}}[\adv^{\hat{P}} \text{ wins } \Gbbotp(\ket{\tk}, \hat{P})] = \varepsilon
    \]
    for some $\varepsilon > 0$.
    We will use \Cref{lemma:collapsing} to give a reduction $\calR$ making $\poly$ many queries to $\Verify$ such that
    \[
        \Pr_{\substack{\sk \from \AuthKeyGen(\secparam) \\ \ket{\tk} \from \AuthTokenGen(\sk)}}[\calR^{\Verify} \text{ wins } \Gauth(\ket{\tk})] \ge \varepsilon - \negl,
    \]
    which will complete the proof.

    The reduction behaves as follows.
    \begin{enumerate}
        \item[] $\calR$:
        \item Receive $\ket{\tk}$ from the challenger and forward it to $\adv$. Every time the adversary makes a query to $P$, use the oracle access to $\Verify(\sk, \cdot)$ to compute the response.
        \item Receive $\regQ$ from the adversary. Measure $\regQ$, obtaining outcome $(x_1,z_1)$; return the collapsed register $\regQ$ to the adversary.
        \item Receive $\regQ$ from the adversary again. Measure $\regQ$, obtaining outcome $(x_2,z_2)$.
        \item Output $(x_1,z_1,x_2,z_2)$.
    \end{enumerate}
    
    The only thing that's different about the adversary's view in $\Gbbotp$ and in the game with $\calR$ is that $\regQ$ is measured in the game with $\calR$.
    But by \Cref{lemma:collapsing}, this is indistinguishable to the adversary!

    Now observe that if $\adv$ wins $\Gbbotp$, then $\calR$ breaks the one-time unforgeability of the one-time authentication scheme $(\AuthKeyGen, \AuthTokenGen, \Sign, \Verify)$.
    This is because $y \ne y'$ both must not be $\bot$, which means that $\Verify(\sk, (x_1,z_1)) = \Verify(\sk, (x_2,z_2)) = 1$ and $(x_1,z_1) \ne (x_2,z_2)$.
    This completes the proof.
\end{proof}

It only remains to prove \Cref{lemma:collapsing}.

\subsection{Warm-up: Random oracles are collapsing} \label{section:warm-up}
As a warm-up to proving \Cref{lemma:collapsing}, we use the compressed oracles technique to present a direct proof that random oracles are collapsing.
It is already known that random oracles are collapsing by a proof of \cite{Unr16}, but that proof makes use of multiple lemmas that are proven using the compressed oracle technique --- here, we prove the result directly using this technique.

\begin{proposition} \label{prop:random-oracles-collapse}
    A random oracle $R : \{0,1\}^m \to \{0,1\}^n$ is collapsing with advantage $O(q^3 / 2^n)$, where $q$ is the number of queries made by the adversary.
\end{proposition}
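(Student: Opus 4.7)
The plan is to use Zhandry's compressed oracle technique to explicitly track the internal state of the random oracle. I would first replace $R$ with its purified compressed-oracle simulation: a quantum register $D$ maintaining a ``database'' of input-output pairs (initially empty), together with a query unitary that decompresses the database at the queried input in superposition, performs the standard lookup, and recompresses. This simulation is perfectly indistinguishable from a true random oracle from the adversary's view, and after $q$ queries the database is supported on configurations of at most $q$ filled entries.

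Next, I would analyze the challenger's step in this representation. Computing $R$ on $\regQ$ and measuring the output $y$ leaves the joint state in the form $|\psi\rangle = \sum_x |x\rangle_{\regQ} |a_x\rangle_{D, \mathrm{aux}}$, where each $|a_x\rangle$ is supported only on databases $D$ containing the entry $(x, y)$. Consequently, for $x \neq x'$ the components $|a_x\rangle$ and $|a_{x'}\rangle$ have overlapping support only on ``colliding'' databases, meaning databases containing two distinct inputs both mapping to the measured value $y$.

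The key step is to show that, conditioned on no collision ever occurring in the database, the $\regQ$ register is a deterministic function of $D$ (namely, the unique input in $D$ mapping to $y$), so the additional measurement of $\regQ$ in the $b = 1$ branch merely reads off this function and cannot change the joint measurement statistics on $(\regQ, D)$ available to the adversary through their subsequent queries. This lets me bound the distinguishing advantage by twice the squared amplitude of the projector $\Pi_{\mathrm{coll}}$ onto collision configurations, evaluated over the adversary's entire interaction.

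Finally, I would bound $\|\Pi_{\mathrm{coll}} |\Psi\rangle\|^2$ by an inductive argument over the $q$ queries. Initially the database is empty, so $\|\Pi_{\mathrm{coll}} |\Psi_0\rangle\| = 0$. Each compressed-oracle query can increase this norm by at most $O(\sqrt{q/2^n})$, because a newly-added database entry takes a uniform value in $\{0,1\}^n$ and the amplitude for matching any of the at most $q$ existing entries is $O(\sqrt{q/2^n})$. Telescoping via the triangle inequality across $q$ queries gives $\|\Pi_{\mathrm{coll}} |\Psi_q\rangle\| = O(q^{3/2}/2^{n/2})$, hence $\|\Pi_{\mathrm{coll}} |\Psi_q\rangle\|^2 = O(q^3/2^n)$. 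The hardest part will be the third-paragraph reduction from ``no-collision'' to indistinguishability together with the per-query amplitude bound, both of which require careful bookkeeping on how the compressed-oracle query unitary decomposes across the colliding and non-colliding subspaces and preserves the function-of-$D$ structure of $\regQ$ on the latter.
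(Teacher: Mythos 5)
Your proposal is correct and follows essentially the same route as the paper: your observation that, absent collisions, $\regQ$ is a deterministic function of the database is exactly the paper's $\Invert$ unitary (which writes the unique preimage of the measured $y$ in $D$ onto a fresh challenger-side register), your requirement that subsequent queries preserve this ``function-of-$D$'' structure is the paper's almost-commutation of $\Invert$ with $\CPhsO$ (Zhandry's Lemma 7), and your inductive per-query collision bound is the paper's invocation of Zhandry's Theorem 2. The only cosmetic difference is that the paper phrases the argument as pushing the explicit unitary $\Invert$ through the $q_1$ post-challenge queries via a triangle inequality, rather than conditioning on the no-collision event.
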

\begin{proof}
    We use the compressed oracle technique due to \cite{Zha19}, and we define $\CPhsO, \CPhsO'$ and the notation for compressed oracle databases as in that paper.
    Suppose that the adversary makes $q_0$ queries before the challenge query and $q_1$ queries after.
    Depending on whether or not the challenger records the challenge query input, the state just after the challenge query is
    \[
        \sum_{D, x} \alpha_{D, x} \ket{\psi_{D, x}} \otimes \ket{x} \otimes \ket{D, D(x)}
    \]
    or
    \[
        \sum_{D, x} \alpha_{D, x} \ket{\psi_{D, x}} \otimes \ket{x} \otimes \ket{D, D(x), x}.
    \]
    Here, the first register contains the adversary's state; the second is the query register; the third is the database register; and the fourth and fifth are the registers into which the challenger records the challenge query output and (in the second case) input.

    Next, the adversary makes $q_1$ further queries to $\CPhsO$.
    Let $U$ be the unitary describing the evolution of the game state under these queries.

    Let $\Invert$ be the unitary that reads $\ket{D, y}$ and writes the lexicographically first $x$ such that $(x, y) \in D$ onto a new register, if there is any such $x$.
    This function is almost identical to $\FindInput$ from \cite{Zha19}; one can compute either of $\Invert$ or $\FindInput$ with one call to the other.

    Our proof strategy is to show that applying $\Invert$ to $\ket{D, D(x)}$ at the end of the game in the $b=0$ case yields the state in the $b=1$ case.
    Since $\Invert$ is applied only to the challenger's side, this will complete the proof.

    Since $\Invert$ commutes with $\CPhsO'$, \cite[Lemma 7]{Zha19} implies that $\Invert$ $O(1/2^n)$-almost commutes with $\CPhsO$. Therefore
    \[
        \norm{\Invert \circ U \sum_{D, x} \alpha_{D, x} \ket{\psi_{D, x}} \otimes \ket{x, D, D(x)} - U \circ \Invert \sum_{D, x} \alpha_{D, x} \ket{\psi_{D, x}} \otimes \ket{x, D, D(x)}} = O\left(q / \sqrt{2^n}\right).
    \]
    By \cite[Theorem 2]{Zha19}, the mass on branches $D$ with collisions is at most $O(\sqrt{q^3/2^n})$ in the state on the right, so:
    \[
        \norm{U \circ \Invert \sum_{D, x} \alpha_{D, x} \ket{\psi_{D, x}} \otimes \ket{x, D, D(x)} - U \sum_{D, x} \alpha_{D, x} \ket{\psi_{D, x}} \otimes \ket{x, D, D(x), x}} = O\left(\sqrt{q^3 / 2^n}\right).
    \]
    Putting it together, we have
    \[
        \norm{\Invert \circ U \sum_{D, x} \alpha_{D, x} \ket{\psi_{D, x}} \otimes \ket{x, D, D(x)} - U \sum_{D, x} \alpha_{D, x} \ket{\psi_{D, x}} \otimes \ket{x, D, D(x), x}} = O\left(\sqrt{q^3 / 2^n}\right),
    \]
    completing the proof.
\end{proof}

\subsection{Proof of Lemma \ref{lemma:collapsing}} \label{section:proof-of-lemma-1}

In this section we will prove \Cref{lemma:collapsing}, which we reproduce below.

\begin{replemma}{lemma:collapsing}
    Let $f : \{0,1\}^m \times \{0,1\}^n \to \calY$. Suppose that for all $x \in \{0,1\}^m$, the min-entropy of $f(x; r)$ for random $r \from \{0,1\}^n$ is at least $\tau$. Then if $H : \{0,1\}^m \to \{0,1\}^n$ is a random oracle, the function $g : x \mapsto f(x; H(x))$ is collapsing with advantage $O(q^3 \cdot 2^{-\tau})$.
\end{replemma}
\begin{proof}
    We will follow the same proof strategy as \Cref{prop:random-oracles-collapse}, except that we will need to generalize the components to handle the case where the high-entropy function $f$ is applied to the random oracle output.
    We give our generalizations of \cite[Lemma 7]{Zha19} and \cite[Theorem 2]{Zha19} in \Cref{claim:almost-commuting} and \Cref{claim:collision-finding}, respectively.
    
    Suppose that the adversary makes $q_0$ queries before the challenge query and $q_1$ queries after.
    Depending on whether or not the challenger records the challenge query input, the state just after the challenge query is
    \[
        \sum_{D, x} \alpha_{D, x} \ket{\psi_{D, x}} \otimes \ket{x} \otimes \ket{D, f(x; D(x))}
    \]
    or
    \[
        \sum_{D, x} \alpha_{D, x} \ket{\psi_{D, x}} \otimes \ket{x} \otimes \ket{D, f(x; D(x)), x}.
    \]
    The first register contains the adversary's state; the second is the query register; the third is the database register; and the fourth and fifth are the registers into which the challenger records the challenge query output and (in the second case) input.

    Next, the adversary makes $q_1$ further queries to $\CPhsO$.
    Let $U$ be the unitary describing the evolution of the game state under these queries.

    Let $\Invert_f$ be the unitary that, given $D, y$, writes the lexicographically first $x$ such that $f(x; D(x)) = y$ onto a new register, if such an $x$ exists.
    Our proof strategy is to show that applying $\Invert_f$ to $\ket{D, f(x; D(x))}$ at the end of the game in the $b=0$ case yields the state in the $b=1$ case.
    Since $\Invert_f$ is applied only to the challenger's side, this will complete the proof.

    By \Cref{claim:almost-commuting},
    \begin{align*}
        \Biggl|\!\Biggl| \Invert_f \circ U \sum_{D, x} \alpha_{D, x} \ket{\psi_{D, x}} \otimes \ket{x, D, f(x; D(x))} & \\
        - U \circ \Invert_f \sum_{D, x} \alpha_{D, x} \ket{\psi_{D, x}} \otimes \ket{x, D, f(x; D(x))} \Biggr|\!\Biggr| &= O\left(q / \sqrt{2^\tau}\right).
    \end{align*}
    By \Cref{claim:collision-finding}, the mass on branches $D$ with collisions $f(x; D(x)) = f(x'; D(x'))$ is at most $O(\sqrt{q^3/2^\tau})$ in the state on the right, so
    \begin{align*}
        \Biggl|\!\Biggl| U \circ \Invert_f \sum_{D, x} \alpha_{D, x} \ket{\psi_{D, x}} \otimes \ket{x, D, f(x; D(x))} & \\
        - U \sum_{D, x} \alpha_{D, x} \ket{\psi_{D, x}} \otimes \ket{x, D, f(x; D(x)), x} \Biggr|\!\Biggr| &= O\left(\sqrt{q^3 / 2^\tau}\right).
    \end{align*}
    Putting it together, we have
    \begin{align*}
        \Biggl|\!\Biggl| \Invert_f \circ U \sum_{D, x} \alpha_{D, x} \ket{\psi_{D, x}} \otimes \ket{x, D, f(x; D(x))} & \\
        - U \sum_{D, x} \alpha_{D, x} \ket{\psi_{D, x}} \otimes \ket{x, D, f(x; D(x)), x} \Biggr|\!\Biggr| &= O\left(\sqrt{q^3 / 2^\tau}\right),
    \end{align*}
    completing the proof.
\end{proof}

We need a generalization of Theorem 2 from \cite{Zha19}.

\begin{claim} \label{claim:collision-finding}
    For any adversary making $q$ queries to $\CPhsO$, if the database $D$ is measured after the $q$ queries, the resulting database will contain distinct inputs $x, x'$ such that $f(x; D(x)) = f(x'; D(x'))$ with probability at most $O(q^3/2^\tau)$.
\end{claim}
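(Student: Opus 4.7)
The plan is to adapt the proof of Zhandry's Theorem 2 in \cite{Zha19} by replacing the role of the uniform collision probability $2^{-n}$ (which controls the bound for ordinary collisions) with $2^{-\tau}$, the maximum weight any output can receive under the distribution $f(x;r)$ for random $r$ (guaranteed by the min-entropy hypothesis). Let $\ket{\phi_t}$ denote the joint adversary--database state after $t$ queries to $\CPhsO$, and let $\Pi_B$ project onto databases $D$ containing distinct $x \ne x'$ in their support with $f(x;D(x)) = f(x';D(x'))$. Since $\ket{\phi_0}$ has the empty database, $\norm{\Pi_B \ket{\phi_0}} = 0$, and the probability we must bound is exactly $\norm{\Pi_B \ket{\phi_q}}^2$.

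The main technical step is a per-query amplitude bound of the form $\norm{\Pi_B \ket{\phi_t}} \le \norm{\Pi_B \ket{\phi_{t-1}}} + O(\sqrt{t/2^\tau})$. The idea is that when the $\CPhsO$ update acts on a query to a new input $x_{\mathrm{new}}$, it places the corresponding database entry in the uniform superposition $\sum_{r} \tfrac{1}{\sqrt{2^n}} \ket{r}$, so coherently computing $f(x_{\mathrm{new}};r)$ yields amplitude $\sqrt{\abs{\{r : f(x_{\mathrm{new}};r) = y\}} / 2^n} \le \sqrt{2^{-\tau}}$ for any specific output value $y$, by min-entropy. A triangle-inequality union bound over the at most $t-1$ pre-existing $f$-values that could collide gives an amplitude of $O(\sqrt{t/2^\tau})$ for the update to push mass into $\Pi_B$. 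Queries on inputs already recorded in the database (where $\CPhsO$ modifies rather than adds an entry) are handled by a symmetric argument, and the unitarity of $\CPhsO$ on the bad subspace itself contributes nothing to the growth of $\norm{\Pi_B \ket{\phi_t}}$.

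Telescoping these per-query bounds yields $\norm{\Pi_B \ket{\phi_q}} \le \sum_{t=1}^{q} O(\sqrt{t/2^\tau}) = O(q^{3/2}/\sqrt{2^\tau})$, and squaring gives the claimed $O(q^3/2^\tau)$ probability bound. The main obstacle is the per-query step: Zhandry's original analysis treats collisions directly between database outputs, but because our collisions are mediated by $f$, the argument has to carefully unpack the compressed-oracle update, apply $f$ to the freshly created superposition, and convert the naive $\sqrt{2^{-n}}$ amplitude estimate into the sharper $\sqrt{2^{-\tau}}$ estimate coming from min-entropy. Formalizing this inside the Fourier-basis bookkeeping of \cite{Zha19}, while simultaneously handling the modify-existing-entry case and ensuring the triangle-inequality union bound is in fact tight, is the only nontrivial part of the proof.
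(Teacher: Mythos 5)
Your proposal follows essentially the same route as the paper: track the amplitude on databases containing an $f$-collision, show it grows by $O(\sqrt{q/2^\tau})$ per query by replacing the uniform collision probability $2^{-n}$ with the min-entropy bound $2^{-\tau}$, handle the add-new-entry and modify-existing-entry cases separately, and telescope to get $O(q^3/2^\tau)$. One caution on the per-query step: a plain triangle inequality over the $\le t-1$ pre-existing $f$-values gives amplitude $O(t/\sqrt{2^\tau})$, not $O(\sqrt{t/2^\tau})$, which would telescope to the weaker $O(q^4/2^\tau)$; the paper instead decomposes the post-query state into branches $\ket{\phi_i}$ indexed by which existing entry is collided with, and these are mutually orthogonal (because a database with no pre-existing $f$-collision has distinct values $f(x_i;r_i)$, so the corresponding sets of new $r$'s are disjoint), so their norms add in quadrature and yield the $\sqrt{t}$ you need. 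You flag this tightness issue yourself, so the gap is one you anticipated, but the orthogonality observation is the ingredient that closes it.
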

\begin{proof}
    We closely follow the proof of Theorem 2 from \cite{Zha19}.
    
    We say that $D$ contains an $f$-\emph{collision} if it contains distinct inputs $x, x'$ such that $f(x; D(x)) = f(x'; D(x'))$.
    The compressed oracle's database starts out empty, so the probability of containing an $f$-collision starts out as 0.
    We will show that the probability cannot rise too much with each query.
    Consider the game state just before the $q$th query:
    \[
        \ket{\psi} = \sum_{x,w,z,D} \alpha_{x,w,D} \ket{x,w,z} \otimes \ket{D},
    \]
    where $D$ is the compressed phase oracle, $x, w$ are the query registers, and $z$ is the adversary's internal register.

    Let $P$ be the projection onto the span of basis states $\ket{x,w,z} \otimes \ket{D}$ where $D$ contains an $f$-collision. We will relate $\norm{P \ket{\psi}}$ to $\norm{P \circ \CPhsO \ket{\psi}}$.

    Let $Q$ be the projection onto the span of basis states $\ket{x,w,z} \otimes \ket{D}$ such that $D$ does not contain an $f$-collision, $w \ne 0$, and $D(x) = \bot$.
    Let $R$ be the projection onto the span of states such that $D$ does not contain an $f$-collision, $w \ne 0$, and $D(x) \ne \bot$.
    Let $S$ be the projection onto the span of states such that $D$ does not contain an $f$-collision and $w=0$.
    Observe that $P+Q+R+S=I$.

    Applying $P \circ \CPhsO$ to any state $\sum_{x,w,z,D} \alpha_{x,w,z,D} \ket{x,w,z} \otimes \ket{D}$ in the support of $Q$ yields
    \[
        \sum_{x,w,z,D} \alpha_{x,w,z,D} \ket{x,w,z} \otimes 2^{-n/2} \sum_{r : \exists (x',r') \in D \text{ s.t. } f(x;r)=f(x';r')} (-1)^{w \cdot r} \ket{D \cup (x,r)}.
    \]
    Let $\ket{D} = \ket{(x_1, r_1), \dots, (x_{q'}, r_{q'})}$, where $q' \le q$.
    We can then write the above state as $2^{-n/2} \sum_{i=1}^q \ket{\phi_i}$, where
    \[
        \ket{\phi_i} = \sum_{x,w,z,D} \alpha_{x,w,z,D} \ket{x,w,z} \otimes \sum_{r : f(x; r) = f(x_i; r_i)} (-1)^{w \cdot r} \ket{D \cup (x,r)}.
    \]
    The $\ket{\phi_i}$ are orthogonal and satisfy $\norm{\ket{\phi_i}} \le \sqrt{2^{n-\tau}} \norm{Q\ket{\psi}}$, because there are at most $2^{n-\tau}$ choices of $r$ satisfying $f(x;r) = f(x_i; r_i)$ due to the min-entropy condition on $f$.
    Therefore, $\norm{P \circ \CPhsO \circ Q \ket{\psi}} \le \sqrt{q/2^\tau} \norm{Q \ket{\psi}}$.

    For states $R \ket{\psi} = \sum_{x,w,z,D} \alpha_{x,w,z,D} \ket{x,w,z} \otimes \ket{D}$ in the support of $R$, let $D'$ be $D$ with $x$ removed and write $D = D' \cup (x, r)$ for $r = D(x)$.
    Then $\CPhsO \ket{x,w,z} \otimes \ket{D}$ is
    \[
        \ket{x,w,z} \otimes \left((-1)^{w \cdot r} \left(\ket{D' \cup (x,r)} + \frac{1}{\sqrt{2^n}} \ket{D'}\right) + \frac{1}{2^n} \sum_{r'} (1-(-1)^{w \cdot r} - (-1)^{w \cdot r'}) \ket{D' \cup (x,r')}\right),
    \]
    so
    \[
        P \circ \CPhsO \circ R \ket{\psi} = \sum_{x,w,z,D',r} \alpha_{x,w,z,D',r} \ket{x,w,z} \otimes \frac{1}{2^n} \sum_{r' : \exists (x'',r'') \in D \text{ s.t. } f(x;r')=f(x'';r'')} (1-(-1)^{w \cdot r} - (-1)^{w \cdot r'}) \ket{D' \cup (x,r')}.
    \]
    Let $\ket{D'} = \ket{(x_1, r_1), \dots, (x_{q'}, r_{q'})}$ and
    \[
        \ket{\phi_i} = \frac{1}{2^n} \sum_{x,w,z,D',r} \alpha_{x,w,z,D',r} \ket{x,w,z} \otimes \sum_{r' : f(x; r') = f(x_i; r_i)} (1-(-1)^{w \cdot r} - (-1)^{w \cdot r'}) \ket{D' \cup (x, r')}.
    \]
    Then the $\ket{\phi_i}$ are orthogonal and
    \begin{align*}
        \norm{\ket{\phi_i}}^2 &= \frac{1}{4^n} \sum_{x,w,z,D',r' : f(x; r') = f(x_i; r_i)} \abs{\sum_{r} \alpha_{x,w,z,D',r} (1-(-1)^{w \cdot r} - (-1)^{w \cdot r'})}^2 \\
        &\le \frac{3}{2^n} \sum_{x,w,z,D',r,r' : f(x; r') = f(x_i; r_i)} \abs{\alpha_{x,w,z,D',r}}^2 \\
        &\le \frac{3}{2^\tau} \norm{R \ket{\psi}}^2,
    \end{align*}
    so $\norm{P \circ \CPhsO \circ R \ket{\psi}} \le \sqrt{3q/2^\tau} \norm{R \ket{\psi}}$.

    Finally, $\norm{P \circ \CPhsO \circ P \ket{\psi}} \le \norm{P \ket{\psi}}$ and $\CPhsO \circ S \ket{\psi} = S \ket{\psi}$.
    Putting everything together, $\norm{P \ket{\psi}}$ increases by at most $O(\sqrt{q/2^n})$ with each query.
    Therefore, after $q$ queries, the total norm of $P \ket{\psi}$ is at most $O(\sqrt{q^3/2^n})$, completing the proof.
\end{proof}

The following is similar to Lemma 7 from \cite{Zha19}.

\begin{claim} \label{claim:almost-commuting}
    Consider a quantum system over $x, w, D, y$. Let $f$ be a function such that for every $x$, for uniformly random $r = D(x)$, $f(x; r)$ has min-entropy at least $\tau$.
    Then the following unitaries $O(1/\sqrt{2^\tau})$-almost commute:
    \begin{itemize}
        \item $\CPhsO$, acting on the $x, w, D$ registers.
        \item $\Invert_f$, taking as input the $D, y$ registers and XORing the output onto a new register.
    \end{itemize}
\end{claim}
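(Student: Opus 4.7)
The plan is to mirror Lemma 7 of \cite{Zha19}, exploiting that $\Invert_f$ only reads the database $D$ and writes its output onto a fresh register. First I would recall that the uncompressed phase oracle $\CPhsO'$ acts as $\ket{x, w} \otimes \ket{D} \mapsto (-1)^{w \cdot D(x)}\ket{x, w} \otimes \ket{D}$, leaving $D$ untouched, so it commutes \emph{exactly} with $\Invert_f$. Since the compressed phase oracle is obtained from the uncompressed one by conjugating with the local decompression map $\StdDecomp$ (which converts between the $\bot$ symbol and the uniform superposition on the queried slot), the task reduces to bounding the commutator $[\Invert_f, \StdDecomp]$ on a single database slot and invoking the triangle inequality.

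Next I would examine $\StdDecomp$ more carefully. It acts locally on the $x$-th slot of $D$, swapping the $\bot$ symbol with the uniform superposition $2^{-n/2}\sum_r \ket{(x,r)}$ and acting trivially on basis states orthogonal to both. On any basis database $D$, applying $\Invert_f(D,y)$ before versus after this swap can only produce different outputs if $f(x; r) = y$ for some $r$ appearing in the superposition: otherwise the lexicographically first $x'$ with $f(x'; D(x')) = y$ is unaffected by adding or removing the pair $(x,r)$. By the min-entropy hypothesis on $f$, for each pair $(x, y)$ there are at most $2^{n-\tau}$ such $r$ out of $2^n$, so the $L^2$-mass of the ``disagreeing'' branches is at most $\sqrt{2^{n-\tau}/2^n} = 2^{-\tau/2}$. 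Combining this with exact commutativity with $\CPhsO'$ yields the claimed $O(2^{-\tau/2})$ bound per query.

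The main obstacle will be the bookkeeping around the lexicographic tiebreaker in $\Invert_f$: introducing a new entry $(x, r)$ into $D$ can abruptly change which $x'$ is the lexicographically first preimage of $y$, but only when $f(x; r) = y$ places $x$ into that preimage set. This event is precisely what the min-entropy bound controls, so no additional slack is needed beyond what the hypothesis already gives. A secondary subtlety is the branch where $D(x) \ne \bot$ before the query, but on such branches $\CPhsO$ acts (up to the decompression boundary) as a pure phase on $D$, and so commutes with $\Invert_f$ modulo the same $\StdDecomp$-conjugation analysis.
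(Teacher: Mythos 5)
Your proposal is correct and follows essentially the same route as the paper: reduce to the commutator of $\Invert_f$ with $\StdDecomp$ (since $\CPhsO'$ and $\Increase$ commute exactly with $\Invert_f$), then use the min-entropy hypothesis to bound the $L^2$-mass of the branches where inserting or deleting $(x,r)$ changes the lexicographically first preimage, yielding the $O(2^{-\tau/2})$ bound. The paper carries this out by splitting the state with projections onto $D(x)=\bot$ and $D(x)=\ket{+}$ and computing the commutator norm explicitly, which is exactly the formalization your sketch calls for.
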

\begin{proof}
    Recall the definitions of $\CPhsO', \StdDecomp$, and $\Increase$ from \cite{Zha19}.
    In particular, recall that $\CPhsO = \StdDecomp \circ \CPhsO' \circ \StdDecomp \circ \Increase$.
    Since $\Invert_f$ commutes with $\CPhsO'$ and $\Increase$ by construction, it suffices to show that $\Invert_f$ and $\StdDecomp$ are $O(1/\sqrt{2^\tau})$-almost commuting.
    To show this, the following intuition is taken from \cite{Zha19}, adapted to our setting.
    
    For $\StdDecomp$ to have any effect, either (1) $D(x) = \bot$ or (2) $D(x)$ is in uniform superposition; $\StdDecomp$ will simply toggle between the two cases.
    In Case (2), the probability that $\Invert_f$ will find a match at input $x$ is at most $2^{-\tau}$.
    And in Case (1), $\Invert_f$ will never find a match at input $x$.
    Hence, there is an exponentially small error between the action of $\Invert_f$ on these two cases.

    More formally, let $\ket{\psi} = \sum_{x,D,y} \alpha_{x,D,y} \ket{x,D,y}$.
    We omit the $w$ register because neither $\StdDecomp$ nor $\Invert_f$ touch $w$.
    Let $P$ be the projection onto $D(x) = \bot$, and let $Q$ be the projection onto $D(x) = \ket{+}$.
    Writing $D = D' \cup (x,\bot)$,
    \[
        \StdDecomp \circ \Invert_f \circ P \ket{\psi} = 2^{-n/2} \sum_{x,D',y,r} \alpha_{x,D',y} \ket{x,D' \cup (x,r),y,\Invert_f(D',y)}
    \]
    and
    \[
        \Invert_f \circ \StdDecomp \circ P \ket{\psi} = 2^{-n/2} \sum_{x,D',y,r} \alpha_{x,D',y} \ket{x,D' \cup (x,r),y,\Invert_f(D' \cup (x,r),y)},
    \]
    where $\Invert_f(D, y)$ outputs the lexicographically first $x$ such that $f(x,D(x)) = y$, if such an $x$ exists.
    Letting $\Delta := \StdDecomp \circ \Invert_f - \Invert_f \circ \StdDecomp$, we have
    \[
        \Delta \circ P \ket{\psi} = 2^{-n/2} \sum_{x,D',y,r} \alpha_{x,D',y} \ket{x,D' \cup (x,r),y} \otimes \left(\ket{\Invert_f(D',y)} - \ket{\Invert_f(D' \cup (x,r), y)}\right),
    \]
    and
    \begin{align*}
        \norm{\Delta \circ P \ket{\psi}}^2 &= 2^{-n} \sum_{x,D',y,r} \abs{\alpha_{x,D',y}}^2 \left(2 - 2 \cdot 1\{\Invert_f(D',y)=\Invert_f(D' \cup (x,r), y)\}\right) \\
        &= 2 \norm{P \ket{\psi}}^2 - 2 \sum_{x,D',y} \abs{\alpha_{x,D',y}}^2 \Pr_r[\Invert_f(D',y)=\Invert_f(D' \cup (x,r), y)] \\
        &\le \frac{2}{2^{\tau}} \norm{P \ket{\psi}}^2,
    \end{align*}
    since $\Pr_r[\Invert_f(D',y)=\Invert_f(D' \cup (x,r), y)] \ge 1-2^{-\tau}$.
    
    Similarly, $\norm{\Delta \circ Q \ket{\psi}}^2 \le 2 \cdot 2^{-\tau} \norm{Q \ket{\psi}}^2$.
    Since $\Delta \circ (I-P-Q) = 0$, it follows that $\norm{\Delta \ket{\psi}} = O(1/\sqrt{2^\tau})$.
\end{proof}

\bibliographystyle{alpha}
\bibliography{references}

\end{document}